\theoremstyle{plain}
\newtheorem{theorem}{Theorem}
\newtheorem{lemma}{Lemma}
\numberwithin{lemma}{subsection}
\theoremstyle{definition}
\newtheorem{definition}{Definition}
\newcommand{\W}{\mathcal W}
\newcommand{\M}{\mathcal M}
\newcommand{\E}{\mathbb E}
\renewcommand{\P}{\mathbb P}
\title{Two-Sided Random Matching Markets: Ex-Ante Equivalence of the Deferred Acceptance Procedures}
\author{Simon Mauras}
\email{simon.mauras@irif.fr}
\affiliation{Université de Paris, IRIF, CNRS, F-75013 Paris, France}
\begin{abstract}
Stable matching in a community consisting of $N$ men and $N$ women is a classical combinatorial problem that has been the subject of intense theoretical and empirical study since its introduction in 1962 in a seminal paper by Gale and Shapley.

When the input preference profile is generated from a distribution, we study the output distribution of two stable matching procedures: women-proposing-deferred-acceptance and men-proposing-deferred-acceptance.
We show that the two procedures are ex-ante equivalent: that is, under certain conditions on the input distribution, their output distributions are identical.

In terms of technical contributions, we generalize (to the non-uniform case) an integral formula, due to Knuth and Pittel, which gives the probability that a fixed matching is stable. Using an inclusion-exclusion principle on the set of rotations, we give a new formula which gives the probability that a fixed matching is the women/men-optimal stable matching. We show that those two probabilities are equal with an integration by substitution.
\end{abstract}
\begin{document}

\maketitle
\fancyfoot[C]{\thepage}
\newpage

\section{Introduction}

Stable matching is a classical combinatorial problem, where $N$ women and $N$ men, all heterosexual and monogamous, have ordinal preferences over the persons of the opposite sex. The objective is to find a matching without any blocking pair: a woman and a man who are not married to each other but prefer each other to their actual mates.

In their seminal paper, Gale and Shapley \cite{gale1962college} proved that there always exists a stable matching, and gave a deferred acceptance procedure to find it: one side proposes, while the other side disposes. However, the \emph{men proposing deferred acceptance} (MPDA) and \emph{women proposing deferred acceptance} (WPDA) might not find the same stable matching. Moreover, Gale and Shapley showed that MPDA finds a stable matching which is optimal for the men and pessimal for the women (men have their best possible stable wife and women have their worst possible stable husband). By symmetry, WPDA finds a stable matching which is optimal for the women and pessimal for the men. Thus, there is a unique stable matching if and only if MPDA and WPDA output the same matching.

More recently, the research community in economics and computation has studied the extend to which the output of MPDA and WPDA differ in real life instances\footnote{Deferred acceptance procedures have been successfully implemented in many matching markets; see \cite{roth1999redesign, abdulkadirouglu2005new, abdulkadirouglu2005boston,correa2019school}.}, using either empirical data or stochastic models \cite{immorlica2015incentives, kojima2009incentives, ashlagi2017unbalanced, hassidim2018need,2019stablepairs}, showing that most of the time a stable matching is essentially unique (phenomenon often referred as ``core-convergence'').

Following this direction of enquiry, we consider a model where the input preference profile is generated from a distribution. We show that under certain types of input distribution, the output distributions of MPDA and WPDA are identical. This result is unusual for several reasons:

\begin{itemize}
    \item Most of the literature focuses on ``large markets'' where the core converges when the number of agents grow to infinity. Our result is stronger: output distributions are identical, and this equality holds no matter what the size of the market is.
    \item But in return, we only proved a weaker property of ``ex-ante core-convergence''; whereas most papers study ex-post the difference between outputs of MPDA and WPDA.
\end{itemize}

Our contribution in this paper is threefold:
we discovered numerically an intriguing mathematical property 
(ex-ante equivalence of WPDA and MPDA) which holds in
random matching markets with a vanilla model of preference distributions
(see Section~\ref{sec:example} and Theorem~\ref{thm:incomplete});
we identified a larger class of preference distributions for which
this property remains valid
(see Section~\ref{sec:input} and Theorem~\ref{thm:antipop});
and we formally proved this property (see Section~\ref{sec:theorem}).
Previous results that are used in our analysis are summarized in Section
\ref{sec:previous}.


\vspace{-.2cm}
\section{Motivating special case}
\label{sec:example}

Consider a random two-sided matching market, where a (given) procedure computes a stable matching.
Every agent of the market is interested by the distribution of outcomes.
But computing which outcome an agent can expect is a difficult question,
that has only been answered in special cases (for example, see \cite{lee2016incentive} for a model with random vertical preferences).
The starting point of this work was to understand the output distributions of MPDA and WPDA, in a very simple matching market with $M$ men and $W$ women having heterogeneous preferences (agents have idiosyncratic preferences).

\begin{definition}[Incomplete uniform preference distribution]
\label{def:incomplete}
  Consider any fixed bipartite graph $G = (\M \cup \W, E)$ with $\M = \{m_1, \dots, m_M\}$ the set of men, $\W = \{w_1, \dots, w_W\}$ the set of women, and $E \subseteq \M \times \W$ the set of edges. Each agent ranks his/her neighbours (non-edges are not acceptable), uniformly and independently at random. We call such input model an \emph{incomplete uniform preference distribution}.
\end{definition}

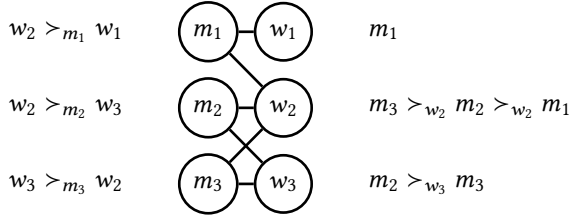
\begin{figure}[h]
    \centering
    \begin{tikzpicture}[line width=1pt]
        \node[draw,circle] (M1) at (0,2) {$m_1$};
        \node[draw,circle] (M2) at (0,1) {$m_2$};
        \node[draw,circle] (M3) at (0,0) {$m_3$};
        \node[draw,circle] (W1) at (1,2) {$w_1$};
        \node[draw,circle] (W2) at (1,1) {$w_2$};
        \node[draw,circle] (W3) at (1,0) {$w_3$};
        \draw (W1) -- (M1);
        \draw (W2) -- (M1);
        \draw (W2) -- (M2);
        \draw (W2) -- (M3);
        \draw (W3) -- (M2);
        \draw (W3) -- (M3);
        \node[anchor=west] at (2,2) {$m_1$};
        \node[anchor=west] at (2,1) {$m_3\succ_{w_2}m_2\succ_{w_2}m_1$};
        \node[anchor=west] at (2,0) {$m_2\succ_{w_3}m_3$};
        \node[anchor=east] at (-1, 2) {$w_2\succ_{m_1}w_1$};
        \node[anchor=east] at (-1, 1) {$w_2\succ_{m_2}w_3$};
        \node[anchor=east] at (-1, 0) {$w_3\succ_{m_3}w_2$};
    \end{tikzpicture}
    \caption{Example of incomplete uniform preference distribution. The probability of sampling this particular preference profile is $1 / ( 2! \cdot 2! \cdot 2! \cdot 1! \cdot 3! \cdot 2!) = 1/96$.
    There are two stable matchings,
    MPDA outputs $\{(m_1,w_1),(m_2,w_2),(m_3,w_3)\}$ and
    WPDA outputs $\{(m_1,w_1),(m_2,w_3),(m_3,w_2)\}$.}
    \label{fig:inputincomplete}
\end{figure}

Figure~\ref{fig:inputincomplete} illustrates Definition~\ref{def:incomplete}, on a bipartite graph with $3$ women and $3$ men. The output distributions of procedures MPDA and WPDA can (painfully) be computed by hand, and they happen to be identical. They are given in Figure~\ref{fig:outputincomplete}.
In particular $\P[\text{MPDA outputs }\mu_1] = \P[\text{WPDA outputs }\mu_1]$ is already a non-trivial result, as Figure~\ref{fig:inputincomplete} describes an instance where MPDA outputs $\mu_1$ and WPDA outputs $\mu_2$.

\begin{figure}[h]
    \centering
    \begin{tabular}{rcccc} Matching &
    \begin{tikzpicture}[line width=1pt, align=center]
        \node[draw,circle] (M1) at (0,2) {$m_1$};
        \node[draw,circle] (M2) at (0,1) {$m_2$};
        \node[draw,circle] (M3) at (0,0) {$m_3$};
        \node[draw,circle] (W1) at (1,2) {$w_1$};
        \node[draw,circle] (W2) at (1,1) {$w_2$};
        \node[draw,circle] (W3) at (1,0) {$w_3$};
        \draw (W1) -- (M1);
        \draw[black!10!white] (W2) -- (M1);
        \draw (W2) -- (M2);
        \draw[black!10!white] (W2) -- (M3);
        \draw[black!10!white] (W3) -- (M2);
        \draw (W3) -- (M3);
    \end{tikzpicture} &
    \begin{tikzpicture}[line width=1pt]
        \node[draw,circle] (M1) at (0,2) {$m_1$};
        \node[draw,circle] (M2) at (0,1) {$m_2$};
        \node[draw,circle] (M3) at (0,0) {$m_3$};
        \node[draw,circle] (W1) at (1,2) {$w_1$};
        \node[draw,circle] (W2) at (1,1) {$w_2$};
        \node[draw,circle] (W3) at (1,0) {$w_3$};
        \draw (W1) -- (M1);
        \draw[black!10!white] (W2) -- (M1);
        \draw[black!10!white] (W2) -- (M2);
        \draw (W2) -- (M3);
        \draw (W3) -- (M2);
        \draw[black!10!white] (W3) -- (M3);
    \end{tikzpicture}&
    \begin{tikzpicture}[line width=1pt]
        \node[draw,circle] (M1) at (0,2) {$m_1$};
        \node[draw,circle] (M2) at (0,1) {$m_2$};
        \node[draw,circle] (M3) at (0,0) {$m_3$};
        \node[draw,circle] (W1) at (1,2) {$w_1$};
        \node[draw,circle] (W2) at (1,1) {$w_2$};
        \node[draw,circle] (W3) at (1,0) {$w_3$};
        \draw[black!10!white] (W1) -- (M1);
        \draw (W2) -- (M1);
        \draw[black!10!white] (W2) -- (M2);
        \draw[black!10!white] (W2) -- (M3);
        \draw (W3) -- (M2);
        \draw[black!10!white] (W3) -- (M3);
    \end{tikzpicture}&
    \begin{tikzpicture}[line width=1pt]
        \node[draw,circle] (M1) at (0,2) {$m_1$};
        \node[draw,circle] (M2) at (0,1) {$m_2$};
        \node[draw,circle] (M3) at (0,0) {$m_3$};
        \node[draw,circle] (W1) at (1,2) {$w_1$};
        \node[draw,circle] (W2) at (1,1) {$w_2$};
        \node[draw,circle] (W3) at (1,0) {$w_3$};
        \draw[black!10!white] (W1) -- (M1);
        \draw (W2) -- (M1);
        \draw[black!10!white] (W2) -- (M2);
        \draw[black!10!white] (W2) -- (M3);
        \draw[black!10!white] (W3) -- (M2);
        \draw (W3) -- (M3);
    \end{tikzpicture}\\
     & $\mu_1$ & $\mu_2$ & $\mu_3$ & $\mu_4$ \\\\
    Probability & 19/48 & 19/48 & 5/48 & 5/48
    \end{tabular}
    \caption{Output distribution, common to the procedures MPDA and WPDA, using the input model of Figure 1.}
    \label{fig:outputincomplete}
\end{figure}
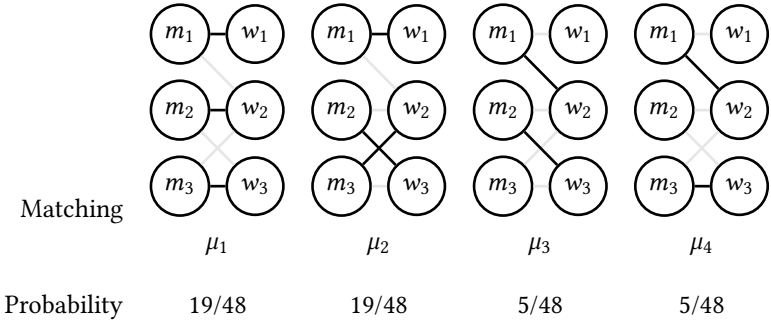

For every bipartite graph with $M,W \leq 4$, we used computer simulations to compute the output distribution of MPDA and WPDA. Surprisingly the two output distributions were always identical, which led us to conjecture Theorem \ref{thm:incomplete}.

\begin{theorem}
\label{thm:incomplete}
    In a random matching market where the preference profile is sampled from an incomplete uniform preference distribution, the output distributions of MPDA and WPDA are identical.
\end{theorem}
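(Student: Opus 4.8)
The plan is to reduce Theorem~\ref{thm:incomplete} to a per-matching statement and attack it with an integral representation. Since MPDA returns the men-optimal stable matching and WPDA the women-optimal one, it suffices to prove that for every matching $\mu$ supported on edges of $G$,
$$\P[\mu \text{ is men-optimal}] \;=\; \P[\mu \text{ is women-optimal}].$$
To access these probabilities I would first encode the random preferences by latent variables: to each edge $(m,w)\in E$ attach independent uniform variables $\xi_{m,w}$ and $\eta_{m,w}$ on $[0,1]$, let $m$ rank his neighbours by increasing $\xi_{m,\cdot}$ and $w$ rank hers by increasing $\eta_{\cdot,w}$; this reproduces the incomplete uniform distribution. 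Conditioning on the values $x_m=\xi_{m,\mu(m)}$ and $y_w=\eta_{\mu(w),w}$ at the matched edges makes the non-matching edges independent, and a pair $(m,w)\notin\mu$ fails to block exactly when $\xi_{m,w}>x_m$ or $\eta_{m,w}>y_w$, an event of conditional probability $1-x_m y_w$. Integrating out the thresholds yields the Knuth--Pittel-type formula
$$\P[\mu \text{ is stable}] \;=\; \int_{[0,1]^{V_\mu}} \prod_{(m,w)} \bigl(1-x_m y_w\bigr)\, d\mathbf{x}\, d\mathbf{y},$$
the product ranging over acceptable blocking candidates, with unmatched agents contributing analogous one-sided factors; this is the generalization of the classical formula to the incomplete case.

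Plain stability does not separate the two optimal matchings, so I would refine the formula using the rotation structure of the stable lattice. The women-optimal matching is the unique stable matching exposing no rotation, and the men-optimal matching is the unique stable one exposing no reverse rotation; moreover the exposure of a prescribed rotation translates into explicit order conditions on the latent variables carried by its edges. Writing the indicator ``no rotation is exposed'' by inclusion--exclusion over the finite set $\mathcal R$ of candidate rotations, and integrating out all non-threshold variables as above, I would obtain an explicit integrand $f_W(\mathbf{x},\mathbf{y})$ with
$$\P[\mu \text{ is women-optimal}] \;=\; \int_{[0,1]^{V_\mu}} f_W(\mathbf{x},\mathbf{y})\, d\mathbf{x}\, d\mathbf{y},$$
each inclusion--exclusion term contributing a polynomial in the thresholds that records which rotations are forced to be exposed. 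The symmetric construction with reverse rotations yields an integrand $f_M$ with $\P[\mu \text{ is men-optimal}] = \int_{[0,1]^{V_\mu}} f_M(\mathbf{x},\mathbf{y})\, d\mathbf{x}\, d\mathbf{y}$.

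The two formulas are then compared through the change of variables $x_m \mapsto 1-x_m$ and $y_w \mapsto 1-y_w$, which fixes the cube and has Jacobian of modulus one. Reversing every threshold exchanges ``fraction of better partners'' with ``fraction of worse partners'' and hence turns the men-side order conditions into the women-side ones; the claim, which I would verify across the two inclusion--exclusion expansions, is that $f_M(\mathbf{1}-\mathbf{x},\mathbf{1}-\mathbf{y})=f_W(\mathbf{x},\mathbf{y})$, whence $\int f_M = \int f_W$ and the two optimal-matching probabilities coincide. Since this holds for every $\mu$, the distributions of the men-optimal and women-optimal matchings are identical, i.e.\ MPDA and WPDA are ex-ante equivalent. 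I expect the decisive difficulty to lie in the rotation inclusion--exclusion and its interaction with the substitution: one must fix the precise combinatorial meaning of (reverse) rotation exposure in the incomplete market, prove that the alternating sum over $\mathcal R$ exactly isolates the top (respectively bottom) of the lattice rather than merely some stable matching, and then match the men- and women-side integrands under the substitution while correctly carrying the one-sided contributions of the agents left unmatched by $\mu$.
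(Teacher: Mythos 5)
Your overall architecture matches the paper's: reduce to showing $\P[\mu\text{ is women-optimal}]=\P[\mu\text{ is men-optimal}]$ for each fixed $\mu$, derive a Knuth--Pittel integral by conditioning on thresholds at the matched edges, refine it by inclusion--exclusion over rotations, and finish with a change of variables. The gap is in the last step: the substitution $x_m\mapsto 1-x_m$, $y_w\mapsto 1-y_w$ does not satisfy the pointwise identity $f_M(\mathbf 1-\mathbf x,\mathbf 1-\mathbf y)=f_W(\mathbf x,\mathbf y)$ on which your argument hinges. Already for $N=2$ with complete uniform preferences and $\mu=\{(m_1,w_1),(m_2,w_2)\}$ one computes $f_W=(1-x_1y_2)(1-x_2y_1)-(1-x_1)(1-x_2)y_1y_2$ and $f_M=(1-x_1y_2)(1-x_2y_1)-x_1x_2(1-y_1)(1-y_2)$; your substitution does map the rotation term of $f_M$ onto that of $f_W$, but it simultaneously destroys the leading stability factor, turning $(1-x_1y_2)(1-x_2y_1)$ into $(1-(1-x_1)(1-y_2))(1-(1-x_2)(1-y_1))$, which differs pointwise (evaluate at $\mathbf x=\mathbf y=\mathbf 0$: one side is $1$, the other $0$). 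No single global substitution of the aggregated integrand can work, because the $R=\emptyset$ terms of $f_W$ and $f_M$ are literally the same function ($\mu$ being stable is a symmetric event), so any substitution that nontrivially realigns the rotation terms necessarily misaligns the stability term.

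The repair, and the route the paper takes, is to compare the two inclusion--exclusion sums term by term rather than integrand by integrand: $\sigma\mapsto\sigma^{-1}$ is a bijection between $\{\sigma:\sigma_{|\M}=\mu_{|\M}\}$ and $\{\sigma:\sigma_{|\W}=\mu_{|\W}\}$ preserving the number of long cycles, so it suffices to prove $\P[\sigma\text{ is stable}]=\P[\sigma^{-1}\text{ is stable}]$ for each permutation separately, via a substitution that depends on $\sigma$ (coordinatewise $x\mapsto x^{P(m,\sigma(m))/P(m,\sigma^{-1}(m))}$; the identity in the uniform complete case, but genuinely needed here). Two further points you flag but do not resolve are handled differently in the paper. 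First, it never runs the integral argument on truncated lists: it realizes the incomplete uniform distribution as the $\varepsilon\to 0$ limit of complete-market symmetric anti-popularity distributions with popularity $\varepsilon$ on non-edges, which removes the unmatched-agent one-sided factors at the price of making the per-term substitution nontrivial. Second, the terms of the women-side sum condition on $X_{m,\sigma^{-1}(m)}$, which is not $X_{m,\mu(m)}$ for men inside a rotation, so collapsing the sum into a single integrand $f_W$ over a common variable set $V_\mu$ requires additional integration that you have not carried out; the paper avoids this entirely by keeping the terms separate.
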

\begin{proof}[Proof sketch]
In Section~\ref{sec:theorem} we prove Theorem~\ref{thm:antipop}. In Subsection~\ref{sec:inputincomplete} we prove that Theorem~\ref{thm:antipop} implies Theorem~\ref{thm:incomplete}. Nonetheless, let us give the ideas of the proof on the example of Figures~\ref{fig:inputincomplete} and~\ref{fig:outputincomplete}.

\medskip
To compute the probability that WPDA outputs $\mu_1$, we first compute the probability that $\mu_1$ is stable. Then we subtract the probability that $\mu_1$ is stable but not women-optimal, because $\mu_2$ is stable and improves the outcome of women. It turns out that the probability that $\mu_2$ improves the outcome of women is equal to the probability that $\mu_2$ improves the outcome of men, hence MPDA and WPDA have the same probability of outputting $\mu_1$.

\medskip
In more complicated instances, where there are several ways to improve the outcome of women, we use an inclusion-exclusion principle on the set of rotations (defined in Subsection~\ref{sec:previouslattice}).
\end{proof}

\section{Previous Results}
\label{sec:previous}

In this section, we summarize previous results that are used in our analysis, and define the notations that will be used in the rest of the paper. First, we argue that without loss of generality we can consider matching markets with an equal number of men and women, and where every couple is acceptable (this will be discussed further in Subsection~\ref{sec:inputincomplete}). Then, we review classical results on the structure of the set of stable matchings. Finally, we recall existing formulas which give the probability of stability of a fixed matching.

\subsection{Stable Matchings}
\label{sec:previousmatching}

Stable matchings were introduced in 1962 by Gale and Shapley \cite{gale1962college}.
Let us start with formal definitions and classical notations. Let $N$ be an integer, $\M = \{m_1, \dots, m_N\}$ be a set of men, and let let $\W = \{w_1, \dots, w_N\}$ be a set of women. Each man $m$ has a total ordering $(\succ_m)$ over the women, and each woman $w$ has a total ordering $(\succ_w)$ over the men.
We view a \emph{matching} as a function $\mu : \M \cup \W \rightarrow \M \cup \W$, which is an involution ($\mu^2 = \text{Id}$), where each man is paired with a woman ($\mu(\M) \subseteq \W$), and each woman is paired with a man ($\mu(\W) \subseteq \M$). A matching is \emph{stable} if there are no blocking pairs $(m,w)$, where $m$ prefers $w$ to his wife and $w$ prefers $m$ to her husband.
\[\text{Matching }\mu\text{ is stable} \quad\Leftrightarrow\quad \forall m \in \M, \forall w \in \W, (\mu(m) \succeq_m w)  \text{ or } (\mu(w) \succeq_w m)\]

In the classical definition of stable matchings, the preference lists are complete. However, in most applications, it is not practical to ask every agent to report a full preference list. In a generalization of stable matchings, people can declare some members of the opposite sex to be unacceptable. In this setting, stable matchings may not be perfect matchings, and a pair $(m,w)$ can block a matching only if both $m$ and $w$ declare each other to be acceptable.
A further generalization allows the set of men and the set of women to be of different size.

\medskip
In Section~\ref{sec:example}, we described a model of stable matchings with unacceptable partners. Without loss of generality, it is enough to study balanced matching markets with complete preference list. Firstly, dealing with unbalanced matching markets is easy: it is always possible to add ``virtual'' persons that are unacceptable to everyone from the opposite sex. Secondly, in matching markets with unacceptable partners, the set of people that are matched is the same in every stable matching. Therefore, running MPDA (resp. WPDA) with unacceptable partners is equivalent with the following procedure: first we symmetrize unacceptability (such that $w$ is acceptable to $m$ if and only if $m$ is acceptable to $w$), second we append unacceptable partners at the end of preference lists (in any arbitrary order), third we run MPDA (resp. WPDA) on this new instance, fourth we remove couples that were not acceptable.

\subsection{Lattice of stable matchings}
\label{sec:previouslattice}

Given in input the preference profile (containing all the preference lists), the procedure MPDA outputs a stable matching $\mu_\M$ which is optimal for the men and pessimal for the women. Symmetrically, WPDA outputs a stable matching $\mu_\W$ which is optimal for the women and pessimal for the men.
\begin{align*}
    \forall \mu\text{ stable matching},\quad
    &\forall m \in \M, \quad \mu_\M(m) \succeq_m \mu(m) \succeq_m \mu_\W(m)\\
    &\forall w \in \W, \quad \mu_\W(w) \succeq_w \mu(w) \succeq_w \mu_\M(w)
\end{align*}
The structure of the set of stable matchings was studied by Knuth and Conway
\cite{knuth1976mariages,knuth1997stable}: with the partial orders $\succeq_\M$ and $\succeq_\W$, the set of stable matching is a distributive lattice.
\begin{align*}
    \forall \mu_1,\mu_2 \text{ stable matchings},\quad
    & \mu_1 \succeq_\M \mu_2 \;\Leftrightarrow\; \forall m\in\M,\mu_1(m) \succeq_m \mu_2(m)\\ 
    & \mu_1 \succeq_\W \mu_2 \;\Leftrightarrow\; \forall w\in\W,\mu_1(w) \succeq_w \mu_2(w)\\
    & \mu_1 \succeq_\W \mu_2 \;\Leftrightarrow\; \mu_2 \succeq_\M \mu_1
\end{align*}

The concept of rotation was later introduced by Irving, Leather and Gusfield (see \cite{gusfield1989stable} for a nice survey). We view a \emph{rotation} as a simple directed cycle $r$ in the complete bipartite graph over $\M \cup \W$. When a person $x \in \M\cup\W$ belongs to the cycle, we write $x \in r$, denote $r(x)$ a successor and $r^{-1}(x)$ a predecessor.
In a stable matching $\mu_1$, rotation $r$ is exposed and women-improving if for all man $m$, $r(m) = \mu_1(m)$ and $r^{-1}(m)$ is $m$'s favourite woman among women $w$ to whom he prefers his wife ($\mu_1(m) \succ_m w$), and who prefer $m$ to their husband ($m \succ_w \mu_1(w)$).
Eliminating rotation $r$ in matching $\mu_1$ creates a new stable matching $\mu_2$; we have $\mu_2 \succeq_\W \mu_1$.
\[\forall m\in M,\; \mu_2(m) = \left\{\begin{array}{cc}
    r^{-1}(m) & \text{if }m \in r \\
    \mu_1(m) & \text{if }m \notin r
\end{array}\right.\qquad
\forall w\in\W,\; \mu_2(w) = \left\{\begin{array}{cc}
    r(w) & \text{if }w \in r \\
    \mu_1(w) & \text{if }w \notin r
\end{array}\right.\]
Symmetrically, rotation $r$ is exposed and men-improving in stable matching $\mu_2$. Eliminating $r$ in $\mu_2$ creates stable matching $\mu_1$; we have $\mu_1 \succeq_\M \mu_2$.

\begin{figure}[h]
    \centering
    \begin{tabular}{rcl}
    \begin{tikzpicture}[line width=1pt]
        \node[draw,circle] (M1) at (0,2) {$m_1$};
        \node[draw,circle] (M2) at (0,1) {$m_2$};
        \node[draw,circle] (M3) at (0,0) {$m_3$};
        \node[draw,circle] (W1) at (1,2) {$w_1$};
        \node[draw,circle] (W2) at (1,1) {$w_2$};
        \node[draw,circle] (W3) at (1,0) {$w_3$};
        \draw (W1) -- (M1);
        \draw (W2) -- (M2);
        \draw (W3) -- (M3);
        \node[anchor=east] at (-.5, 2) {$w_2\succ_{m_1}w_1\succ_{m_1}w_3$};
        \node[anchor=east] at (-.5, 1) {$w_2\succ_{m_2}w_3\succ_{m_2}w_1$};
        \node[anchor=east] at (-.5, 0) {$w_3\succ_{m_3}w_2\succ_{m_3}w_1$};
    \end{tikzpicture} &
    \begin{tikzpicture}[line width=1pt]
        \node[draw,circle] (M1) at (0,2) {$m_1$};
        \node[draw,circle] (M2) at (0,1) {$m_2$};
        \node[draw,circle] (M3) at (0,0) {$m_3$};
        \node[draw,circle] (W1) at (1,2) {$w_1$};
        \node[draw,circle] (W2) at (1,1) {$w_2$};
        \node[draw,circle] (W3) at (1,0) {$w_3$};
        \draw[->] (M2) -- (W2);
        \draw[->] (W3) -- (M2);
        \draw[->] (M3) -- (W3);
        \draw[->] (W2) -- (M3);
    \end{tikzpicture} &
    \begin{tikzpicture}[line width=1pt]
        \node[draw,circle] (M1) at (0,2) {$m_1$};
        \node[draw,circle] (M2) at (0,1) {$m_2$};
        \node[draw,circle] (M3) at (0,0) {$m_3$};
        \node[draw,circle] (W1) at (1,2) {$w_1$};
        \node[draw,circle] (W2) at (1,1) {$w_2$};
        \node[draw,circle] (W3) at (1,0) {$w_3$};
        \draw (W1) -- (M1);
        \draw (W2) -- (M3);
        \draw (W3) -- (M2);
        \node[anchor=west] at (1.5,2) {$m_1\succ_{w_1}m_2\succ_{w_1}m_3$};
        \node[anchor=west] at (1.5,1) {$m_3\succ_{w_2}m_2\succ_{w_2}m_1$};
        \node[anchor=west] at (1.5,0) {$m_2\succ_{w_3}m_3\succ_{w_3}m_1$};
    \end{tikzpicture} \\
    Matching $\mu_1$ & Rotation $r$ & Matching $\mu_2$
    \end{tabular}
    \caption{Example of rotation $r$, women-improving from $\mu_1$ to $\mu_2$, men-improving from $\mu_2$ to $\mu_1$.}
    \label{fig:rotation}
\end{figure}
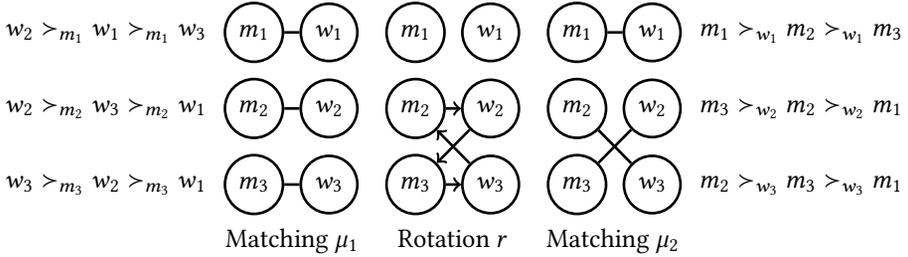

If several rotations are exposed and women-improving in the same stable matching, those rotations are disjoint. To describe several disjoint rotations exposed in the same matching, we will use the concept of stable permutations. Stable permutations (also called stable partitions) have been defined for the more general problem of stable roommates \cite{tan1991necessary}. A \emph{permutation} is a bijection $\sigma : \M\cup\W \rightarrow \M\cup\W$, where the successor of a man is a woman ($\sigma(\M) \subseteq \W$), and where the successor of a woman is a man ($\sigma(\W) \subseteq \M$). A permutation $\sigma$ is \emph{stable} if each person $x$ prefers their successor to their predecessor ($\sigma(x) \succeq_x \sigma^{-1}(x)$), and if there are no blocking pairs $(m,w)$, where $m$ prefers $w$ to his predecessor ($w \succ_m \sigma^{-1}(m)$) and $w$ prefers $m$ to her predecessor ($m \succ_w \sigma^{-1}(w)$).

To conclude this subsection on the structure of stable matchings, we summarize properties on matchings, rotations, and permutations. Those properties will be useful in Section~\ref{sec:theorem}.
\begin{itemize}
    \item A matching $\mu$ is stable (as a matching) if and only if it is stable (as a permutation).
    \item Let $\sigma$ be a permutation. Every cycle of length $>2$ of $\sigma$ is a rotation. 
    Let $\mu_1$ be the only matching such that $\mu_{1|\M} = \sigma_{|\M}$. Let $\mu_2$ be the only matching such that $\mu_{2|\W} = \sigma_{|\W}$. 
    Permutation $\sigma$ is stable if and only if matchings $\mu_1$ and $\mu_2$ are both stable, and every rotation induced by $\sigma$ is exposed and women improving (resp. men improving) in $\mu_1$ (resp. $\mu_2$).
\end{itemize}

\subsection{Probability of Stability}
\label{sec:previousproba}

Random matching markets with $N$ men and $N$ women having uniformly random preference lists were studied in \cite{knuth1976mariages, knuth1997stable, pittel1989average, pittel1992likely}.
Knuth gave an integral formula for the probability $p_N$ that a fixed matching is stable;
with the objective of computing the asymptotic average number of stable matchings
(in the uniform case, all $N!$ matchings have the same probability of being stable).
In 1989, Pittel gave an alternate proof of this integral formula, and showed that $N! \cdot p_N \sim  e^{-1} N \ln N$.

\medskip
Let us retranscribe Pittel's proof of the integral formula. Let $\mu$ be any matching. Let $X$ and $Y$ be two random matrices, uniformly sampled from $[0,1]^{\M\times\W}$. Man $m$ prefers woman $w_1$ to woman $w_2$ if $X_{m,w_1} < X_{m,w_2}$. Correspondingly, woman $w$ prefers man $m_1$ to man $m_2$ if $Y_{m_1,w} < Y_{m_2,w}$.

\medskip
Thus, a pair $(m,w)$ is blocking matching $\mu$ if and only if $X_{m,w} < X_{m,\mu(m)}$ and $Y_{m,w} < Y_{\mu(w),m}$.
We condition on the values of $\mathbf x = [X_{m,\mu(m)}]_{m\in\M}$ and $\mathbf y = [Y_{\mu(w),w}]_{w\in\W}$, and write the probability that a pair blocks $\mu$:
\[\forall (m,w) \text{ such that }\mu(m) \neq w\text{ and }\mu(w) \neq w,\quad
\P[(m,w) \text{ blocks }\mu\,|\,\mathbf x, \mathbf y] = \mathbf x_m \cdot \mathbf y_w\]
Still conditioning on $\mathbf x$ and $\mathbf y$, blocking events are independent, hence the formula:
\[\P[\mu \text{ is stable}] = \underbrace{\int\dots\int}_{2N} \mathrm d\mathbf x \cdot \mathrm d \mathbf y \cdot
\!\!\prod_{\substack{m,w\\\mu(m) \neq w\\\mu(w)\neq m}} (1 - \mathbf x_m \mathbf y_w)\]

In subsequent works \cite{pittel1994upper,pittel2019random}, Pittel extended the above formula to compute the probability that a fixed permutation is stable. We recall that a permutation $\sigma$ is stable if the following is true:
\begin{itemize}
    \item Every person $x$ prefers their successor to their predecessors ($\sigma(x) \succeq_x \sigma^{-1}(x)$)
    \item For each pair $(m,w) \in \M\times\W$, we have
    $(\sigma^{-1}(m) \succeq_m w)  \text{ or } (\sigma^{-1}(w) \succeq_w m)$
\end{itemize}
We condition on the values of $\mathbf x = [X_{m,\sigma^{-1}(m)}]_{m\in\M}$ and $\mathbf y = [Y_{\sigma^{-1}(w),w}]_{w\in\W}$.
\begin{itemize}
    \item Each man $m$ such that $\sigma(m) \neq \sigma^{-1}(m)$ prefers $\sigma(m)$ to $\sigma^{-1}(m)$ with probability $\mathbf x_m$.
    \item Each woman $w$ such that $\sigma(w) \neq \sigma^{-1}(w)$ prefers $\sigma(w)$ to $\sigma^{-1}(w)$ with probability $\mathbf y_w$.
    \item Each pair $(m,w)$ such that $\sigma(m) \neq w$ and $\sigma(w) \neq m$ is blocking with probability $\mathbf x_m \mathbf y_w$.
\end{itemize}
Hence the formula:
\[\P[\sigma \text{ is stable}] = \underbrace{\int\dots\int}_{2N} \mathrm d\mathbf x \cdot \mathrm d \mathbf y \cdot
\prod_{\substack{m,w\\\sigma(m) = w\\\sigma(w) \neq m}} \mathbf x_m \cdot
\prod_{\substack{m,w\\\sigma(m) \neq w\\\sigma(w) = m}} \mathbf y_w \cdot
\!\!\prod_{\substack{m,w\\\sigma(m) \neq w\\\sigma(w) \neq m}} (1 - \mathbf x_m \mathbf y_w)\]

For the more general problem of stable roommates, Mertens \cite{mertens2015small}
combined this formula with an inclusion-exclusion principle to compute the
probability that a random instance has a solution.

\section{Input Model}
\label{sec:input}

In this section, we describe the preference distribution that will be our input model. Subsection~\ref{sec:inputweighted} gives the most general definition, Subsections~\ref{sec:inputincomplete} and \ref{sec:inputvertical} detail interesting special cases.

\subsection{Symmetric preference distributions}
\label{sec:inputweighted}

After observing that the output distributions of MPDA and WPDA are identical when the preference profile is generated from a bipartite graph (see Section~\ref{sec:example}), we used computer simulations on more general classes of input distributions. We observed that MPDA and WPDA are ex-ante equivalent with the input model illustrated in Figure~\ref{fig:symmetricantipop} and defined in Definition~\ref{def:symmetricantipop}.

\begin{figure}[h]
    \centering
    \small
    \begin{minipage}{.4\textwidth}
    \[P \;=\quad \begin{array}{c|c|c|c|}
    & w_1 & w_2 & w_3\\\hline
    m_1 & 2 & 1 & 3\\\hline
    m_2 & 5 & 6 & 2\\\hline
    m_3 & 3 & 4 & 1\\\hline
    \end{array}\]
    \par\medskip
    The preference list of $w_1$ is $m_2 \succ m_1 \succ m_3$ with probability:
    \[\frac{1/3}{1/2+1/5+1/3} \cdot \frac{1/2}{1/2+1/5} \cdot \frac{1/5}{1/5} \approx 0.23\]
    \end{minipage}
    \quad
    \begin{minipage}{.55\textwidth}
    \scalebox{.8}{\begin{tikzpicture}[>=stealth,xscale=.85]
    \node[draw, circle] (T) at (10,0) {};
    \node[draw] (T1) at (6,1.5) {$m_1$ is 3\textsuperscript{rd}};
    \node[draw] (T2) at (6,0) {$m_2$ is 3\textsuperscript{rd}};
    \node[draw] (T3) at (6,-1.5) {$m_3$ is 3\textsuperscript{rd}};
    \node[] (T01) at (4.5,1.5) {};
    \node[] (T02) at (4.5,0) {};
    \node[draw] (T13) at (2.5,-.5) {$m_1$ is 2\textsuperscript{nd}};
    \node[draw] (T23) at (2.5,-1.5) {$m_2$ is 2\textsuperscript{nd}};
    \node[] (T023) at (1,-1.5) {};
    \node[draw] (T213) at (0,-.5) {$m_2$ is 1\textsuperscript{st}};
    \draw[->] (T) -- (T1.east);
    \draw[->] (T) -- (T2.east);
    \draw[->] (T) -- (T3.east);
    \draw[->] (T3) -- (T13.east);
    \draw[->] (T3) -- (T23.east);
    \draw[->,dashed] (T1) -- (T01.east);
    \draw[->,dashed] (T2) -- (T02.east);
    \draw[->] (T13) -- (T213.east);
    \draw[->,dashed] (T23) -- (T023.east);
    \node at (9,1.3) {$\frac{1/2}{1/5+1/3+1/2}$};
    \node at (8,0.3) {$\frac{1/5}{1/5+1/3+1/2}$};
    \node at (9,-1.1) {$\frac{1/3}{1/5+1/3+1/2}$};
    \node at (4.25,-.4) {$\frac{1/2}{1/5+1/2}$};
    \node at (4.25,-1.2) {$\frac{1/5}{1/5+1/2}$};
    \node at (1.25,-0.2) {$\frac{1/5}{1/5}$};
    \end{tikzpicture}}
    \end{minipage}
    \caption{Symmetric anti-popularity preference distribution}
    \label{fig:symmetricantipop}
\end{figure}
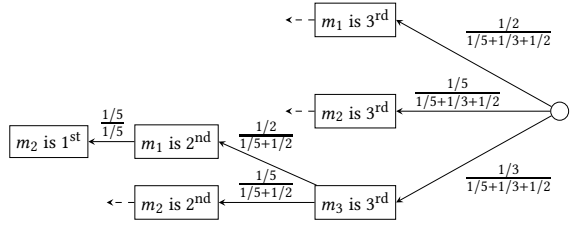

\par\bigskip
The stochastic process used to generate preference lists is very similar (but not equivalent) to the model studied in \cite{immorlica2015incentives,kojima2009incentives,2019stablepairs}. Agents build their preference lists by sampling without replacement from a distribution (in \cite{immorlica2015incentives,kojima2009incentives,2019stablepairs} agents first sample their favourite partner, in this paper agents first sample their least preferred partner).

\begin{definition}[Symmetric anti-popularity preference distribution]
\label{def:symmetricantipop}
    Consider any function $P : \M\times\W \rightarrow \mathbb R_{>0}$, where  $P(m,w)$ is the ``popularity'' that $m$ and $w$ attribute to each other. Each man $m$ first builds an ``anti-popularity'' distribution over the women, where woman $w$ has a probability of $1/P(m,w)$ (renormalized such that the sum of probability is 1); then he builds his preference list from the end, by sampling without replacement from this ``anti-popularity'' distribution: he first samples his least preferred partner, then his second least, ..., then his favorite partner. Symmetrically, each woman $w$ builds a preference list using her ``anti-popularity'' distribution over the men.
\end{definition}
We say that this preference distribution is symmetric because the ``popularity'' that $m$ gives to $w$ is the same as the ``popularity'' that $w$ gives to $m$. The ``popularity'' parameter $P(m,w)$ relates to how likely are $m$ and $w$ to like each other. In particular, a woman $w$ will prefer man $m_1$ to man $m_2$ with probability $P(m_1,w) / (P(m_1,w) + P(m_2,w))$.

\medskip
Definition~\ref{def:symmetricantipop} has several equivalent formulations. Definition~\ref{def:symmetricpower} will be used in the proof in Section~\ref{sec:theorem}. Definition~\ref{def:symmetricmemoryless} is an intermediate formulation, useful to prove the equivalence of Definitions~\ref{def:symmetricantipop} and \ref{def:symmetricpower}. For a direct proof of equivalence, see Lemma 4 of \cite{efraimidis2006weighted}.

\begin{definition}[Symmetric memoryless utility preference distribution]
\label{def:symmetricmemoryless}
Utility preferences are defined by a collection of values $(U_{m,w}, V_{m,w})$, where $U_{m,w}$ is the utility that man $m$ gets if he is matched with $w$, where $V_{m,w}$ is the utility that woman $w$ gets if she is matched with $m$, and where each agent wants to maximize their utility.
Consider only the cases where all $U_{m,w}$ and $V_{m,w}$ are independent random variables, such that the expected values are symmetric (that is $\E[U] = \E[V]$), and such that each coefficient is memoryless (that is $\P[X > s+t \;|\; X > t] = P[X > s]$ for $s,t > 0$).
\end{definition}

\begin{lemma}
\label{lemma:symmetric1}
 Symmetric anti-popularities and symmetric memoryless utilities induce the same class of preference distributions.
\end{lemma}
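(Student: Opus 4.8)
The plan is to show that both models reduce to the same sequential ``weighted sampling without replacement'' description of each preference list, so that the two induced classes of distributions coincide. The bridge between them is the classical theory of competing exponential clocks.

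First I would pin down what a memoryless utility is. The stated property $\P[X > s+t \mid X > t] = \P[X > s]$ for all reals $s,t > 0$ is exactly the characterizing functional equation of the exponential distribution among continuous positive variables, so each $U_{m,w}$ is $\mathrm{Exp}(\lambda^U_{m,w})$ and each $V_{m,w}$ is $\mathrm{Exp}(\lambda^V_{m,w})$, with means $1/\lambda^U_{m,w}$ and $1/\lambda^V_{m,w}$. The symmetry hypothesis $\E[U] = \E[V]$ means $\E[U_{m,w}] = \E[V_{m,w}]$ for every pair, so a single rate $\lambda_{m,w} := \lambda^U_{m,w} = \lambda^V_{m,w}$ governs both sides; continuity guarantees that ties occur with probability zero, so a genuine total order is produced. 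I would then set the dictionary $P(m,w) = 1/\lambda_{m,w} = \E[U_{m,w}]$.

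The heart of the argument is to identify, for a fixed man $m$, the law of the ranking obtained by sorting the independent exponentials $(U_{m,w})_{w \in \W}$ in increasing order, which is the order from least preferred to most preferred since agents maximize utility. I would invoke the competing-exponentials decomposition: for independent $U_{m,w} \sim \mathrm{Exp}(\lambda_{m,w})$, the index of the minimum is $w$ with probability $\lambda_{m,w}/\sum_{w'} \lambda_{m,w'}$, and, by the memoryless property, conditioned on the minimizer and its value the remaining variables shifted by that value are again independent exponentials with the \emph{same} rates. Iterating this yields precisely a sequential scheme: repeatedly pick a remaining woman with probability proportional to her rate $\lambda_{m,w}$ and append her next in the ``least preferred first'' order. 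Substituting $\lambda_{m,w} = 1/P(m,w)$, this is word-for-word the anti-popularity construction of Definition~\ref{def:symmetricantipop}. The same computation applies to each woman $w$ with the symmetric rate, so the full joint preference profile has the same law under both models.

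Finally I would assemble the two inclusions. Given symmetric memoryless utilities, setting $P(m,w) = \E[U_{m,w}]$ shows their induced profile is an anti-popularity profile for that $P$; conversely, given any $P$, choosing all $U_{m,w}, V_{m,w}$ independent and $\mathrm{Exp}(1/P(m,w))$-distributed yields symmetric memoryless utilities inducing the anti-popularity profile for $P$. I expect the only delicate point to be the third step: justifying that the successive selections are not merely marginally proportional to the rates but jointly independent of the history, which is exactly where the memoryless property (rather than the bare argmin formula) is essential, while everything else is bookkeeping. For a more computational route that avoids the competing-clocks lemma, one can instead cite Lemma~4 of \cite{efraimidis2006weighted}, which directly equates exponential-based ranking with weighted sampling without replacement.
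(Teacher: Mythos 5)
Your proof is correct and follows essentially the same route as the paper: identify memoryless continuous utilities with exponentials, match rates to $1/P(m,w)$, and use the memoryless/competing-clocks property to show that sorting the exponentials reproduces the sequential weighted-sampling-without-replacement construction of the anti-popularity lists. The paper's own proof is just a terser sketch of this argument (and likewise defers the computational details to Lemma~4 of \cite{efraimidis2006weighted}), so you have simply filled in the steps it leaves implicit.
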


\begin{proof}
 Consider any symmetric anti-popularity preference distribution. For all man $m$ and woman $w$, define independent exponential random variables $U_{m,w}$ and $V_{m,w}$ of parameter $1/P(m,w)$.
 In the stochastic process using anti-popularities, someone's last choice is independent with the beginning of their preference list, which is analogous to the memorylessness property of exponential random variables. 
 Conversely, any memoryless continuous random variable in an exponential random variable. For all man $m$ and woman $w$, we define $P(m,w) = \E[U_{m,w}] = \E[V_{m,w}]$.
\end{proof}

\begin{definition}[Symmetric power preference distribution]
\label{def:symmetricpower}
Consider any function $P : \M\times\W \rightarrow \mathbb R_{>0}$, where $P(m,w)$ is the ``power'' that $m$ and $w$ attribute to each other. Let $X$ and $Y$ be two random matrices, uniformly sampled from $[0,1]^{\M\times\W}$. Values of $X$ and $Y$ induce a preference profile:
\[\forall m \in \M, \forall w_1, w_2 \in \W,\quad
w_1 \succ_m w_2 \;\Leftrightarrow\; X_{m,w_1}^{P(m,w_1)} < X_{m,w_2}^{P(m,w_2)}\]
\[\forall w \in \W, \forall m_1, m_2 \in \M,\quad
m_1 \succ_w m_2 \;\Leftrightarrow\; Y_{m_1,w}^{P(m_1,w)} < X_{m_2,w}^{P(m_2,w)}\]
\end{definition}

\begin{lemma}
\label{lemma:symmetric2}
 Symmetric powers and symmetric memoryless utilities induce the same class of preference distributions.
\end{lemma}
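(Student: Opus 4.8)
The plan is to show that both families of distributions coincide with a single intermediate family: that generated by \emph{independent exponential utilities with symmetric means}, i.e.\ collections $(U_{m,w},V_{m,w})$ of mutually independent variables with $U_{m,w},V_{m,w}\sim\mathrm{Exp}(1/P(m,w))$ for some $P:\M\times\W\to\mathbb R_{>0}$, each agent ranking partners by decreasing utility. For the memoryless side this identification is essentially free: as recalled in the proof of Lemma~\ref{lemma:symmetric1}, a continuous memoryless random variable is exponential, so setting $P(m,w)=\E[U_{m,w}]=\E[V_{m,w}]$ shows that the symmetric memoryless utility distributions are exactly this exponential family. The work is therefore to match the symmetric power distributions to the same family.

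The key computation I would carry out is a change of variables between the uniform matrices $X,Y$ of Definition~\ref{def:symmetricpower} and exponential utilities. If $X\sim\mathrm{Unif}[0,1]$ then $-\ln X\sim\mathrm{Exp}(1)$, so for $P>0$ one has $X^{P}=e^{-P(-\ln X)}$ with $P\cdot(-\ln X)\sim\mathrm{Exp}(1/P)$ of mean $P$. I would then set $U_{m,w}=-P(m,w)\ln X_{m,w}$ and $V_{m,w}=-P(m,w)\ln Y_{m,w}$; since the entries of $X,Y$ are independent and uniform, these are mutually independent exponentials of mean $P(m,w)$. Because $t\mapsto e^{-t}$ is strictly decreasing,
\[
X_{m,w_1}^{P(m,w_1)}<X_{m,w_2}^{P(m,w_2)}\;\Longleftrightarrow\;U_{m,w_1}>U_{m,w_2},
\]
and symmetrically on the women's side with $V$. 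Hence the preference profile drawn from the symmetric power distribution with parameter $P$ is almost surely the profile obtained by ranking partners according to decreasing exponential utilities $U,V$ of mean $P(m,w)$.

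From this both inclusions follow. Given a symmetric power distribution with parameter $P$, the utilities $U,V$ above are exponential—hence memoryless and independent—with $\E[U_{m,w}]=\E[V_{m,w}]=P(m,w)$, so it is a symmetric memoryless utility distribution. Conversely, given symmetric memoryless utilities, each $U_{m,w}$ (and $V_{m,w}$) is a continuous memoryless variable, hence exponential of mean $P(m,w):=\E[U_{m,w}]=\E[V_{m,w}]$; its joint law coincides with that of the variables $-P(m,w)\ln X_{m,w}$ (resp.\ $-P(m,w)\ln Y_{m,w}$), and since the preference profile is an almost-surely deterministic function of these matrices, the two induce the same distribution over profiles, namely the symmetric power distribution with parameter $P$.

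I expect the genuine difficulty here to be bookkeeping rather than depth. The two points to get right are: tracking the orientation of the inequality through the strictly decreasing map $x\mapsto x^{P}$ (equivalently $t\mapsto e^{-t}$), so that ``smaller transformed value'' correctly translates to ``higher utility, more preferred''; and invoking continuity of the laws to make ties a null event, so that the map from utility matrices to preference profiles is well defined almost everywhere and the equality of induced distributions is legitimate. Once these are in place, the lemma reduces to the single observation that both constructions realize the same family of independent exponential utilities with symmetric means.
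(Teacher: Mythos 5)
Your proof is correct and follows essentially the same route as the paper: the substitution $U_{m,w}=-P(m,w)\ln X_{m,w}$, $V_{m,w}=-P(m,w)\ln Y_{m,w}$ identifies the symmetric power distribution with independent exponential utilities of mean $P(m,w)$, and the converse uses that a continuous memoryless variable is exponential. You simply make explicit a few details the paper leaves implicit (the orientation of the inequality under $x\mapsto x^{P}$ and the almost-sure absence of ties).
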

\begin{proof}
 Consider a power distribution defined by a function $P : \M\times\W \rightarrow \mathbb R_{>0}$ and two random matrices $X$ and $Y$, uniformly sampled from $[0,1]^{\M\times\W}$. For all man $m$ and woman $w$, let $U_{m,w} = -P(m,w) \cdot \ln X_{m,w}$ and $V_{m,w} = -P(m,w) \cdot \ln Y_{m,w}$. Then $U_{m,w}$ and $V_{m,w}$ are independent exponential random variables, which define symmetric memoryless utilities. The two preference profile distributions are identical.
Conversely, any memoryless continuous distribution is an exponential random variable.
 For all man $m$ and woman $w$, we define $P(m,w) = \E[U_{m,w}] = \E[V_{m,w}]$.
\end{proof}

\subsection{Incomplete preference distribution}
\label{sec:inputincomplete}

In this subsection, we prove that Definition~\ref{def:symmetricantipop} is strictly more general that Definition~\ref{def:incomplete}. In the sense that Theorem~\ref{thm:antipop} implies Theorem~\ref{thm:incomplete}. The main technique is to approximate an incomplete uniform preference distribution with a symmetric anti-popularity distribution, where the popularity of an edge of the bipartite graph is 1 and the popularity of a non-edge approaches 0.

\begin{proof}[Proof (Theorem~\ref{thm:antipop} $\Rightarrow$ Theorem~\ref{thm:incomplete})]
Let $(\M\cup\W,E)$ be any bipartite graph. For all $\varepsilon > 0$, we define a random preference profile $\text{Pref}_\varepsilon = (\succ^\varepsilon_\cdot)$, sampled from a symmetric anti-popularity preference distribution with a popularity function $P_\varepsilon = \varepsilon + (1 - \varepsilon) \cdot \mathbb 1_E$, where $\mathbb 1_E$ is the indicator function of the set of edges $E$.

\medskip
Let $\text{Pref}_{\varepsilon|E}$ be the same preference profile, where every couple not in the set of edges $E$ is declared as unacceptable. Observe that for every $\varepsilon > 0$, the distribution of the random preference profile $\text{Pref}_{\varepsilon|E}$ is exactly the incomplete uniform preference distribution induced by the graph $(\M\cup\W,E)$.

\medskip
Now let us describe the typical behavior of the preference profile $\text{Pref}_\varepsilon$.
We say that event $OK_\varepsilon$ holds if every edge of $E$ precedes every non-edge of $E$.
More formally, we have:
\begin{align*}
    \text{Event }OK_\varepsilon\text{ holds} \qquad\Leftrightarrow\qquad
    \forall (m,w) \in E,\qquad
    &\forall w' \in \W,\quad (m,w') \notin E \;\Rightarrow\; w \succ_m^\varepsilon w'\\
    &\forall m' \in \W,\quad (m',w) \notin E \;\Rightarrow\; m \succ_w^\varepsilon m'
\end{align*}
Using an union bound, event $OK_\varepsilon$ holds with probability at least $1 - \varepsilon N^3$.

\medskip
We see MPDA and WPDA as functions, which take as input a preference profile and output a matching (seen as a subset of $\M\times\W$). Using Subsection~\ref{sec:previousmatching}, we know that when $OK_\varepsilon$ holds, preference profiles $\text{Pref}_\varepsilon$ and $\text{Pref}_{\varepsilon|E}$ have the same stable matchings (when restricted to $E$).
\begin{align*}
    \text{Event }OK_\varepsilon\text{ holds} \qquad\Rightarrow\qquad
    & \text{MPDA}(\text{Pref}_\varepsilon) \cap E = \text{MPDA}(\text{Pref}_{\varepsilon|E})\\
    & \text{WPDA}(\text{Pref}_\varepsilon) \cap E = \text{WPDA}(\text{Pref}_{\varepsilon|E})
\end{align*}
Theorem~\ref{thm:antipop} says that for every $\varepsilon > 0$ the random variables $\text{MPDA}(\text{Pref}_\varepsilon)$ and $\text{WPDA}(\text{Pref}_\varepsilon)$ have the same distribution. This equality, as $\varepsilon$ approaches $0$, proves Theorem~\ref{thm:incomplete}.
\end{proof}

\subsection{Vertical preference distribution}
\label{sec:inputvertical}

Symmetric anti-popularities are useful to model ``cross sided'' preferences. As an example, consider the market of PhDs and post-doc positions. Imagine that Alice has a very good thesis in Computer Science, because of her skills she will most likely apply for a post-doc in a very good computer science department; symmetrically this university will most likely rank Alice first.

However, one might be afraid that symmetric anti-popularities do not encompass ``one sided'' preferences: all the PhDs might prefer university X to university Y, and all the universities might prefer Alice to Bob. The input distributions studied in \cite{immorlica2015incentives,kojima2009incentives,2019stablepairs} were able to model this kind of examples.
In this subsection we answer this concern, proving that \emph{symmetric anti-popularities} are strictly more general than \emph{vertical anti-popularities}.

\begin{definition}[Vertical anti-popularity preference distribution]
Consider two functions $P_\M : \M \rightarrow \mathbb R_{>0}$ and $P_\W : \W \rightarrow \mathbb R_{>0}$, where $P_\M(m)$ is the popularity that all the women give to man $m$, and where $P_\W(w)$ is the popularity that all the men give to woman $w$. Men first build an ``anti-popularity'' distribution over the women, where woman $w$ has a probability of $1/P_\W(w)$ (renormalized such that the sum of probability is 1); then each man builds his preference list from the end, by sampling without replacement from this ``anti-popularity'' distribution: he first samples his least preferred partner, then his second least, ..., then his favorite partner. Symmetrically, each woman $w$ builds a preference list using the ``anti-popularity'' distribution over the men.
\end{definition}

\begin{lemma}
  Vertical anti-popularities can be simulated with symmetric anti-popularities.
\end{lemma}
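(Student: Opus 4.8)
The plan is to exhibit, for an arbitrary vertical anti-popularity distribution given by functions $P_\M : \M \to \mathbb{R}_{>0}$ and $P_\W : \W \to \mathbb{R}_{>0}$, a single symmetric popularity function $P$ that induces exactly the same distribution over preference profiles. The natural candidate is the product
\[
P(m,w) = P_\M(m) \cdot P_\W(w).
\]
This $P$ takes values in $\mathbb{R}_{>0}$ and is symmetric in the sense required by Definition~\ref{def:symmetricantipop}: the value $m$ and $w$ attribute to each other is the common quantity $P(m,w)$. So the symmetric model with this $P$ is well-defined, and it remains only to show it coincides with the given vertical model.

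First I would recall that in both models the preference profile is generated agent by agent, independently, and that each agent's list is obtained by sampling without replacement from their anti-popularity distribution. The key observation is that this sampling process depends only on the \emph{relative} weights within a single agent's distribution: multiplying all of one agent's weights by a fixed positive constant cancels in every renormalization step and therefore leaves the induced distribution over preference lists unchanged.

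Then I would verify the two sides separately. For a fixed man $m$, the symmetric model assigns woman $w$ the anti-popularity weight $1/P(m,w) = 1/\bigl(P_\M(m)\,P_\W(w)\bigr)$; as $w$ ranges over $\W$ this is proportional to $1/P_\W(w)$, which is precisely the weight man $m$ uses in the vertical model (the factor $1/P_\M(m)$ is constant in $w$ and hence immaterial). Symmetrically, for a fixed woman $w$ the weight $1/P(m,w)$ is proportional to $1/P_\M(m)$ as $m$ ranges over $\M$, matching the vertical model. Consequently each agent's anti-popularity distribution agrees under the two models, and since all agents sample independently, the two induced preference-profile distributions are identical.

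The only (minor) obstacle is making explicit that the per-agent multiplicative factor $1/P_\M(m)$ (respectively $1/P_\W(w)$) can be absorbed into the normalization without affecting the sampling-without-replacement dynamics; once this is stated the product construction yields the simulation at once. In contrast to the incomplete case of Subsection~\ref{sec:inputincomplete}, no limiting or approximation argument is required here, since the equality of distributions is exact.
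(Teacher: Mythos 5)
Your proposal is correct and uses exactly the same construction as the paper: setting $P(m,w) = P_\M(m)\cdot P_\W(w)$ and observing that the per-agent constant factor is absorbed by the renormalization step. Your write-up simply makes explicit the cancellation argument that the paper's one-line proof leaves implicit.
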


\begin{proof}
  For all man $m$ and woman $w$, define $P(m,w) = P_\M(m) \cdot P_\W(w)$. Because of the renormalization step, the symmetric anti-popularity preference distribution is identical to the vertical anti-popularity preference distribution.
\end{proof}

\section{Main Theorem}
\label{sec:theorem}

In this section, we prove the main result of this paper. The proof of Theorem~\ref{thm:antipop} is split in three steps, organized in three subsections. First, we generalize to our input model the formula which gives the probability that a permutation is stable. Second we prove that a permutation and its inverse are equally likely to be stable. Third, we compute the output distributions of MPDA and WPDA using the probability of stability of permutations. 

\begin{theorem}
\label{thm:antipop}
    In a random matching market where the preference profile is sampled from a symmetric anti-popularity preference distribution, the output distributions of MPDA and WPDA are identical.
\end{theorem}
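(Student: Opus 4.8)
The plan is to follow the three-step program announced just before the statement, working throughout in the equivalent symmetric \emph{power} model of Definition~\ref{def:symmetricpower}: preferences come from uniform matrices $X,Y\in[0,1]^{\M\times\W}$ and a single symmetric function $P$ with $P(m,w)=P(w,m)$. Since MPDA (resp.\ WPDA) returns the men-optimal $\mu_\M$ (resp.\ women-optimal $\mu_\W$), the goal is to prove $\P[\mu=\mu_\M]=\P[\mu=\mu_\W]$ for every matching $\mu$.

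First I would generalize Pittel's stability formula by redoing his conditioning argument in the power model. Condition on the predecessor thresholds $\mathbf x_m=X_{m,\sigma^{-1}(m)}^{P(m,\sigma^{-1}(m))}$ and $\mathbf y_w=Y_{\sigma^{-1}(w),w}^{P(\sigma^{-1}(w),w)}$. As $X_{m,w}^{P(m,w)}$ has cumulative distribution $t\mapsto t^{1/P(m,w)}$, the event ``$m$ prefers $w$ to his predecessor'' has conditional probability $\mathbf x_m^{1/P(m,w)}$, and symmetrically for women. The three families of constraints (each agent on a nontrivial cycle prefers its successor to its predecessor; no unmatched pair blocks) involve disjoint matrix entries and are therefore conditionally independent, yielding, with $C$ the set of agents lying on a cycle of length $>2$,
\[\P[\sigma\text{ stable}]=\int_{[0,1]^{2N}}\prod_{m}\frac{\mathbf x_m^{1/P(m,\sigma^{-1}(m))-1}}{P(m,\sigma^{-1}(m))}\prod_{w}\frac{\mathbf y_w^{1/P(\sigma^{-1}(w),w)-1}}{P(\sigma^{-1}(w),w)}\prod_{m\in C}\mathbf x_m^{1/P(m,\sigma(m))}\prod_{w\in C}\mathbf y_w^{1/P(\sigma(w),w)}\prod_{\substack{\sigma(m)\neq w\\\sigma(w)\neq m}}\bigl(1-(\mathbf x_m\mathbf y_w)^{1/P(m,w)}\bigr)\,\mathrm d\mathbf x\,\mathrm d\mathbf y,\]
the density prefactors coming from the change of variables $X_{m,\sigma^{-1}(m)}\mapsto\mathbf x_m$.

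The crux is the second step: $\P[\sigma\text{ stable}]=\P[\sigma^{-1}\text{ stable}]$. I would compare the two integrands as functions of the \emph{same} dummy variables. Passing from $\sigma$ to $\sigma^{-1}$ swaps predecessor and successor, so for each $m\in C$ the exponent of $\mathbf x_m$ (namely $1/P(m,\sigma^{-1}(m))+1/P(m,\sigma(m))-1$) is unchanged, and the blocking product ranges over the same set of pairs; only the constant prefactors change, from $1/P(m,\sigma^{-1}(m))$ to $1/P(m,\sigma(m))$, and likewise on the women's side. Hence the ratio of integrands is the constant $\prod_{m\in C}\frac{P(m,\sigma^{-1}(m))}{P(m,\sigma(m))}\prod_{w\in C}\frac{P(\sigma^{-1}(w),w)}{P(\sigma(w),w)}$, which, using the symmetry of $P$, telescopes around each cycle $m_1\!\to\!w_1\!\to\!\cdots\!\to\!m_k\!\to\!w_k\!\to\!m_1$:
\[\prod_{m\in C}\frac{P(m,\sigma^{-1}(m))}{P(m,\sigma(m))}\prod_{w\in C}\frac{P(\sigma^{-1}(w),w)}{P(\sigma(w),w)}=\prod_{\text{cycles}}\prod_{i}\frac{P(m_i,w_{i-1})}{P(m_{i+1},w_i)}=1,\]
since numerator and denominator list the same multiset of factors. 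The integrands therefore coincide pointwise, proving the claim; I would emphasize that symmetry of $P$ is exactly what makes this cancellation occur.

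Finally I would assemble the result by inclusion--exclusion over rotations. Recall that $\mu=\mu_\M$ iff $\mu$ is stable and no men-improving rotation is exposed at $\mu$ (and dually for $\mu_\W$). By the permutation characterization of Subsection~\ref{sec:previouslattice}, a stable permutation $\sigma$ with $\sigma_{|\W}=\mu_{|\W}$ is precisely $\mu$ decorated with a set of disjoint men-improving rotations exposed at $\mu$ (its cycles of length $>2$), so inclusion--exclusion gives
\[\P[\mu=\mu_\M]=\sum_{\sigma:\,\sigma_{|\W}=\mu_{|\W}}(-1)^{c(\sigma)}\,\P[\sigma\text{ stable}],\qquad \P[\mu=\mu_\W]=\sum_{\sigma:\,\sigma_{|\M}=\mu_{|\M}}(-1)^{c(\sigma)}\,\P[\sigma\text{ stable}],\]
where $c(\sigma)$ counts nontrivial cycles. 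Since $\sigma_{|\W}=\mu_{|\W}$ forces $\sigma^{-1}_{|\M}=\mu_{|\M}$, the map $\sigma\mapsto\sigma^{-1}$ is a cycle-count-preserving bijection between the two index sets, and by the second step it preserves each summand; the two signed sums are therefore equal, giving $\P[\mu=\mu_\M]=\P[\mu=\mu_\W]$. I expect the main obstacles to be precisely the clean telescoping of the prefactors in the third displayed equation (where symmetry of $P$ is indispensable) and the bookkeeping that makes disjoint exposed rotations correspond bijectively to stable permutations with the correct side fixed and the correct sign $(-1)^{c(\sigma)}$.
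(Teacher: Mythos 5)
Your proposal is correct and follows essentially the same three-step route as the paper: the generalized Pittel integral (Lemma~\ref{lemma:proba}), the equality $\P[\sigma\text{ stable}]=\P[\sigma^{-1}\text{ stable}]$ via the telescoping product of popularities around each cycle (Lemma~\ref{lemma:inverse}), and the inclusion--exclusion over exposed rotations encoded as stable permutations (Lemma~\ref{lemma:optimal}). The only (cosmetic) difference is that by conditioning on the transformed thresholds $X_{m,\sigma^{-1}(m)}^{P(m,\sigma^{-1}(m))}$ rather than the raw uniforms, you turn the paper's integration by substitution into a pointwise identity of integrands, which is a slightly cleaner packaging of the same cancellation.
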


\begin{proof}
 A fixed matching $\mu$ is outputted by WPDA if and only if it is stable and women-optimal. In  Lemma~\ref{lemma:optimal}, we give a formula for the probability that $\mu$ is stable and women-optimal.
\[\P[\mu\text{ is stable and women-optimal}] = \sum_{\substack{\sigma\text{ permutation}\\\sigma_{|\M} = \mu_{|\M}}} (-1)^{C(\sigma)} \cdot  \P[\sigma\text{ is stable}]\]
Moreover, for every permutation $\sigma$ we have:
\begin{itemize}
    \item $\sigma$ and $\sigma^{-1}$ are equally likely to be stable (proved in Lemma~\ref{lemma:inverse}).
    \item $\sigma$ and $\sigma^{-1}$ have the same number of cycles of length > 2 (that is $C(\sigma) = C(\sigma^{-1})$).
    \item $\sigma_{|\M} = \mu_{|\M}$ if and only if $\sigma^{-1}_{|\W} = \mu_{|\W}$
\end{itemize}
Thus, we have $\P[\mu\text{ is stable and women-optimal}] = \P[\mu\text{ is stable and men-optimal}]$.
The matching $\mu$ has the same probability of being the output of MPDA and WPDA, which concludes the proof.
\end{proof}

\subsection{Probability of stability with a non-uniform distribution}

In this subsection, we use the equivalence of Definitions~\ref{def:symmetricantipop} and~\ref{def:symmetricpower} to prove Lemma~\ref{lemma:proba}.

\begin{lemma}
\label{lemma:proba}
Let $P:\M\times\W\rightarrow\mathbb R_{>0}$ define a symmetric anti-popularity preference distribution.
The probability that a fixed permutation $\sigma : \M\cup\W\rightarrow\M\cup\W$ is stable is
\[\P[\sigma \text{ is stable}] = \underbrace{\int\dots\int}_{2N} \mathrm d\mathbf x \cdot \mathrm d \mathbf y \cdot
\hspace{-.2cm}\prod_{\substack{m,w\\\sigma(m) = w\\\sigma(w) \neq m}}\hspace{-.2cm}
\mathbf x_m^{\frac{P(m,\sigma^{-1}(m))}{P(m,w)}}
\hspace{-.2cm}\prod_{\substack{m,w\\\sigma(m) \neq w\\\sigma(w) = m}}\hspace{-.2cm}
\mathbf y_w^{\frac{P(\sigma^{-1}(w),w)}{P(m,w)}}
\hspace{-.2cm}\prod_{\substack{m,w\\\sigma(m) \neq w\\\sigma(w)\neq m}} \hspace{-.2cm} \Big(1- 
\mathbf x_m^{\frac{P(m,\sigma^{-1}(m))}{P(m,w)}}
\mathbf y_w^{\frac{P(\sigma^{-1}(w),w)}{P(m,w)}}\Big)\]
\end{lemma}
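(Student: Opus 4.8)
The plan is to follow Pittel's argument for the uniform case (recalled in Subsection~\ref{sec:previousproba}), but carried out inside the \emph{power} representation of the preference distribution (Definition~\ref{def:symmetricpower}), whose equivalence with the anti-popularity model is guaranteed by Lemmas~\ref{lemma:symmetric1} and~\ref{lemma:symmetric2}. So I would sample $X,Y$ uniformly from $[0,1]^{\M\times\W}$ and use that man $m$ ranks woman $w$ by the ``score'' $X_{m,w}^{P(m,w)}$ and woman $w$ ranks man $m$ by $Y_{m,w}^{P(m,w)}$, with no ties almost surely since the entries are continuous. The three stability conditions for the permutation $\sigma$ (each agent prefers their successor to their predecessor, and no pair $(m,w)$ with $\sigma(m)\neq w$, $\sigma(w)\neq m$ is blocking) then translate directly into inequalities between these scores.

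The key device is to condition on the $2N$ raw predecessor values $\mathbf x_m = X_{m,\sigma^{-1}(m)}$ and $\mathbf y_w = Y_{\sigma^{-1}(w),w}$, which are each uniform on $[0,1]$; this is exactly what turns the outer integral $\mathrm d\mathbf x\cdot\mathrm d\mathbf y$ into a plain uniform average via the law of total probability, with no Jacobian. Given these values I would compute each remaining event's conditional probability. For a man $m$ in a cycle of length $>2$, preferring his successor $w=\sigma(m)$ to his predecessor means $X_{m,w}^{P(m,w)} < \mathbf x_m^{P(m,\sigma^{-1}(m))}$, i.e. $X_{m,w} < \mathbf x_m^{P(m,\sigma^{-1}(m))/P(m,w)}$, which has conditional probability $\mathbf x_m^{P(m,\sigma^{-1}(m))/P(m,w)}$ because $X_{m,w}$ is an untouched uniform variable; this reproduces the first product. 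The symmetric computation for women gives the second product. For a non-adjacent pair $(m,w)$, blocking requires both $X_{m,w} < \mathbf x_m^{P(m,\sigma^{-1}(m))/P(m,w)}$ and $Y_{m,w} < \mathbf y_w^{P(\sigma^{-1}(w),w)/P(m,w)}$, so by independence of $X$ and $Y$ the conditional blocking probability is the product, and the non-blocking probability is the third factor $1 - \mathbf x_m^{P(m,\sigma^{-1}(m))/P(m,w)}\,\mathbf y_w^{P(\sigma^{-1}(w),w)/P(m,w)}$.

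To collapse these conditional probabilities into a single product I must justify that, given $\mathbf x$ and $\mathbf y$, all the events are mutually independent, which I would verify by exhibiting for each event a distinct entry of $X$ or $Y$ that is not a conditioning entry. Using that $\sigma$ is a bijection: the successor entry $X_{m,\sigma(m)}$ differs from the predecessor entry $X_{m,\sigma^{-1}(m)}$ whenever $m$ lies in a cycle of length $>2$; and for a blocking pair the conditions $\sigma(m)\neq w$, $\sigma(w)\neq m$ force $w\neq\sigma(m)$ and $w\neq\sigma^{-1}(m)$ (the latter since $w=\sigma^{-1}(m)$ would give $\sigma(w)=m$), so $X_{m,w}$ is neither a successor nor a predecessor entry, and symmetrically for $Y_{m,w}$. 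Each off-conditioning entry is therefore touched by at most one event, yielding independence. The main obstacle is precisely this bookkeeping: confirming that the case distinctions defining the three products exactly partition the relevant matrix entries, and that fixed matched pairs (cycles of length $2$, where $\sigma(m)=\sigma^{-1}(m)$) contribute only trivial factors of $1$. Integrating the resulting product of conditional probabilities over the uniform $\mathbf x,\mathbf y$ then gives the stated formula.
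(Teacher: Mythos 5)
Your proposal is correct and follows essentially the same route as the paper: pass to the power representation via Lemmas~\ref{lemma:symmetric1} and~\ref{lemma:symmetric2}, condition on the predecessor scores $\mathbf x,\mathbf y$, compute each conditional probability, and use conditional independence. The only difference is that you spell out the independence bookkeeping (each event touching a distinct non-conditioning entry of $X$ or $Y$) more explicitly than the paper does, which is a welcome addition rather than a deviation.
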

\begin{proof}
Using Lemmas~\ref{lemma:symmetric1} and \ref{lemma:symmetric2}, sampling the preference profile from a symmetric anti-popularity preference distribution (with a ``popularity'' function $P$) is equivalent with sampling the preference profile with symmetric power preference distribution (with a ``power'' function $P$). Hence, let $X$ and $Y$ be two random matrices, uniformly sampled from $[0,1]^{\M\times\W}$. Values of $X$ and $Y$ induce a preference profile:
\[\forall m \in \M, \forall w_1, w_2 \in \W,\quad
w_1 \succ_m w_2 \;\Leftrightarrow\; X_{m,w_1}^{P(m,w_1)} < X_{m,w_2}^{P(m,w_2)}\]
\[\forall w \in \W, \forall m_1, m_2 \in \M,\quad
m_1 \succ_w m_2 \;\Leftrightarrow\; Y_{m_1,w}^{P(m_1,w)} < X_{m_2,w}^{P(m_2,w)}\]
As in Subsection~\ref{sec:previousproba} for the uniform case, we condition on the values of $\mathbf x = [X_{m,\sigma^{-1}(m)}]_{m\in\M}$ and $\mathbf y = [Y_{\sigma^{-1}(w),w}]_{w\in\W}$. The permutation $\sigma$ is stable if for all pair $(m,w)$ we have:
\begin{itemize}
    \item If $\sigma(m) = w$ and $\sigma(w) \neq m$, then $w\succ_m\sigma^{-1}(m)$.
    \[\P[w \succ_m \sigma^{-1}(m)] =
    \P\big[X_{m,w}^{P(m,w)} < X_{m,\sigma^{-1}(m)}^{P(m,\sigma^{-1}(m))}\big]
    = \mathbf x_m^{\frac{P(m,\sigma^{-1}(m))}{P(m,w)}}\]
    \item If $\sigma(m) \neq w$ and $\sigma(w) = m$, then  $m \succ_w\sigma^{-1}(w)$.
    \[\P[m \succ_w \sigma^{-1}(w)] =
    \P\big[Y_{m,w}^{P(m,w)} < Y_{\sigma^{-1}(w),w}^{P(\sigma^{-1}(w),w)}\big]
    = \mathbf y_w^{\frac{P(\sigma^{-1}(w),w)}{P(m,w)}}\]
    \item If $\sigma(m) \neq w$ and $\sigma(w) \neq m$, then $\sigma^{-1}(m)\succ_m w$ or $\sigma^{-1}(w) \succ_w m$.
    \begin{align*}
        \P[\sigma^{-1}(m)\succ_m w\text{ or }\sigma^{-1}(w) \succ_w m]
        &= 1 - \P[w \succ_m \sigma^{-1}(m)\text{ and }m \succ_w \sigma^{-1}(w)] \\
        &= 1 - \P\big[X_{m,w}^{P(m,w)} < X_{m,\sigma^{-1}(m)}^{P(m,\sigma^{-1}(m))}\big]
        \cdot \P\big[Y_{m,w}^{P(m,w)} < Y_{\sigma^{-1}(w),w}^{P(\sigma^{-1}(w),w)}\big] \\ 
        &= 1- \mathbf x_m^{\frac{P(m,\sigma^{-1}(m))}{P(m,w)}}
        \mathbf y_w^{\frac{P(\sigma^{-1}(w),w)}{P(m,w)}}
    \end{align*}
\end{itemize}
Conditioning on $\mathbf x$ and $\mathbf y$, each $(m,w)$ property is independent, hence the formula.
\end{proof}

\subsection{Integration by substitution}

In this subsection, we prove that a permutation and its inverse are equally likely to be stable. The intuition is the following. If we build $\sigma$ with the example of Figure~\ref{fig:rotation}, we have a cycle of length 4: $\sigma(m_2) = w_2$, $\sigma(w_2) = m_3$, $\sigma(m_3) = w_3$, $\sigma(w_3) = m_2$. The probability that each person prefer their successor to their predecessor is:
\[\tiny
\underbrace{\frac{P(m_2,w_3)}{P(m_2,w_3)+P(m_2,w_2)}}_{\P[\sigma(m_2) \succ_{m_2} \sigma^{-1}(m_2)]}
\cdot
\underbrace{\frac{P(m_2,w_2)}{P(m_2,w_2)+P(m_3,w_2)}}_{\P[\sigma(w_2) \succ_{w_2} \sigma^{-1}(w_2)]}
\cdot
\underbrace{\frac{P(m_3,w_2)}{P(m_3,w_2)+P(m_3,w_3)}}_{\P[\sigma(m_3) \succ_{m_3} \sigma^{-1}(m_3)]}
\cdot
\underbrace{\frac{P(m_3,w_3)}{P(m_3,w_3)+P(m_2,w_3)}}_{\P[\sigma(w_3) \succ_{w_3} \sigma^{-1}(w_3)]}
\]
However, in the inverse permutation $\sigma^{-1}$ we reverse every edge of the cycle. Observe that the probability that each person prefer their successor to their predecessor remains the same.
\[\tiny
\underbrace{\frac{P(m_2,w_2)}{P(m_2,w_3)+P(m_2,w_2)}}_{\P[\sigma^{-1}(m_2) \succ_{m_2} \sigma(m_2)]}
\cdot
\underbrace{\frac{P(m_3,w_2)}{P(m_2,w_2)+P(m_3,w_2)}}_{\P[\sigma^{-1}(w_2) \succ_{w_2} \sigma(w_2)]}
\cdot
\underbrace{\frac{P(m_3,w_3)}{P(m_3,w_2)+P(m_3,w_3)}}_{\P[\sigma^{-1}(m_3) \succ_{m_3} \sigma(m_3)]}
\cdot
\underbrace{\frac{P(m_2,w_3)}{P(m_3,w_3)+P(m_2,w_3)}}_{\P[\sigma^{-1}(w_3) \succ_{w_3} \sigma(w_3)]}
\]
To incorporate the other conditions of stability, we use the formula from Lemma~\ref{lemma:proba}.
\begin{lemma}
\label{lemma:inverse}
Let $P:\M\times\W\rightarrow\mathbb R_{>0}$ define a symmetric anti-popularity preference distribution.\\
A permutation $\sigma : \M\cup\W\rightarrow\M\cup\W$ and its inverse $\sigma^{-1}$ are equally likely to be stable.
\end{lemma}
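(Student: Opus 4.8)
The plan is to prove Lemma~\ref{lemma:inverse} by exhibiting an explicit change of variables in the $2N$-dimensional integral formula from Lemma~\ref{lemma:proba} that carries the integrand for $\sigma$ to the integrand for $\sigma^{-1}$, leaving the domain $[0,1]^{2N}$ invariant. The motivating computation preceding the lemma already reveals the shape of the answer: the ``prefers successor to predecessor'' factors for $\sigma$ and for $\sigma^{-1}$ are obtained from one another by reversing each cycle, and crucially these factors, written out with the $P(\cdot,\cdot)$ weights, match term by term. So the substitution should be one that swaps the role of successor and predecessor at each vertex of every long cycle, while fixing the matched (length-2) components.

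First I would set up notation: for a man $m$ lying on a long cycle of $\sigma$, the integrand from Lemma~\ref{lemma:proba} involves $\mathbf x_m$ raised to the exponent $P(m,\sigma^{-1}(m))/P(m,w)$ in the various factors, where the key weight attached to $m$ is $P(m,\sigma^{-1}(m))$; for $\sigma^{-1}$ the corresponding weight becomes $P(m,\sigma(m))$, since the predecessor and successor swap. I would therefore propose the substitution $\mathbf x_m \mapsto \mathbf x_m^{\,P(m,\sigma(m))/P(m,\sigma^{-1}(m))}$ (and symmetrically $\mathbf y_w \mapsto \mathbf y_w^{\,P(\sigma(w),w)/P(\sigma^{-1}(w),w)}$) for vertices on long cycles, and the identity on matched vertices. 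Each such map sends $[0,1]$ to itself bijectively, so the product substitution is a legitimate diffeomorphism of the hypercube onto itself.

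The key steps are then: (i) compute the Jacobian of this coordinate-wise power map — it is the product over relevant $m$ of $(P(m,\sigma(m))/P(m,\sigma^{-1}(m)))\,\mathbf x_m^{P(m,\sigma(m))/P(m,\sigma^{-1}(m))-1}$, and similarly for the $\mathbf y_w$; (ii) substitute into the three product types of the Lemma~\ref{lemma:proba} integrand for $\sigma$ and simplify exponents, checking that after absorbing the Jacobian every factor becomes exactly the corresponding factor of the integrand for $\sigma^{-1}$; and (iii) verify that the ``telescoping'' of the weight ratios around each cycle causes the leading constant factors (the $P(m,\sigma(m))/P(m,\sigma^{-1}(m))$ prefactors from the Jacobian) to multiply to $1$, since around a closed cycle the product of $P(\cdot,\text{successor})/P(\cdot,\text{predecessor})$ telescopes. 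This telescoping, together with the symmetry $P(m,w)=P(w,m)$, is what makes the two integrals literally equal rather than merely proportional.

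I expect the main obstacle to be step (ii): bookkeeping the exponents in the blocking-pair factors $\bigl(1 - \mathbf x_m^{P(m,\sigma^{-1}(m))/P(m,w)}\mathbf y_w^{P(\sigma^{-1}(w),w)/P(m,w)}\bigr)$ and confirming that the substitution turns each into $\bigl(1 - \mathbf x_m^{P(m,\sigma(m))/P(m,w)}\mathbf y_w^{P(\sigma(w),w)/P(m,w)}\bigr)$, which is the blocking factor for $\sigma^{-1}$. The subtlety is that these factors range over \emph{all} pairs $(m,w)$ with $\sigma(m)\neq w$ and $\sigma(w)\neq m$, including pairs whose two endpoints lie on different cycles or on a cycle and a matched edge, so one must check the substitution is consistent across all combinations; the saving grace is that the substitution acts independently on each coordinate, so each factor transforms correctly regardless of how its two indices are distributed among the cycles. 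Once the integrand for $\sigma$ times the Jacobian is shown to equal the integrand for $\sigma^{-1}$ pointwise on the hypercube, the change-of-variables theorem yields $\P[\sigma\text{ is stable}] = \P[\sigma^{-1}\text{ is stable}]$.
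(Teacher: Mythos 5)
Your proposal is correct and follows essentially the same route as the paper's proof: the same coordinate-wise power substitution $\mathbf x_m \mapsto \mathbf x_m^{P(m,\sigma(m))/P(m,\sigma^{-1}(m))}$ (and its analogue for $\mathbf y_w$), absorbing the Jacobian into the integrand, and observing that the leading ratio prefactors telescope to $1$ around each cycle because the distribution is symmetric. The bookkeeping you flag in step (ii) works out exactly as you anticipate, since the substitution acts independently on each coordinate.
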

\begin{proof}
 From Lemma~\ref{lemma:proba} we have an integral formula for the probability that permutation $\sigma$ is stable. We are going to use an integration by substitution. For each person, we define a function $\varphi$.
 \[
 \forall m \in \M,\quad
 \varphi_m : \left\{\begin{array}{cl}
 [0,1] &\rightarrow\; [0,1] \\
 x &\mapsto\; x^{\frac{P(m,\sigma(m))}{P(m,\sigma^{-1}(m))}}
 \end{array}\right.
 \qquad
 \forall w \in \W,\quad
 \varphi_w : \left\{\begin{array}{cl}
 [0,1] &\rightarrow\; [0,1] \\
 y &\mapsto\; y^{\frac{P(\sigma(w),w)}{P(\sigma^{-1}(w),w)}}
 \end{array}\right.\]
 Each $\varphi$ is a differentiable function with integrable derivative, satisfying $\varphi(0) = 0$ and $\varphi(1) = 1$. When we make the substitution $\int_{\varphi(0)}^{\varphi(1)}  f(t)\mathrm dt = \int_0^1  f(\varphi(t))\varphi'(t)\mathrm dt$, all the terms cancel nicely.
 \[
    \prod_{\substack{m,w\\\sigma(m) \neq w\\\sigma(w)\neq m}} \hspace{-.2cm} \Big(1- 
    \varphi_m(\mathbf x_m)^{\frac{P(m,\sigma^{-1}(m))}{P(m,w)}}
    \varphi_w(\mathbf y_w)^{\frac{P(\sigma^{-1}(w),w)}{P(m,w)}}\Big) =
    \prod_{\substack{m,w\\\sigma^{-1}(m) \neq w\\\sigma^{-1}(w)\neq m}} \hspace{-.2cm} \Big(1- 
    \mathbf x_m^{\frac{P(m,\sigma(m))}{P(m,w)}}
    \mathbf y_w^{\frac{P(\sigma(w),w)}{P(m,w)}}\Big)
\]
\begin{align*}
\prod_m \varphi_m'(\mathbf x_m)
\prod_{\substack{m,w\\\sigma(m) = w\\\sigma(w) \neq m}}\hspace{-.2cm}
\varphi_m(\mathbf x_m)^{\frac{P(m,\sigma^{-1}(m))}{P(m,w)}}
& =
\prod_m {\textstyle\frac{P(m,\sigma(m))}{P(m,\sigma^{-1}(m))}} \cdot 
\mathbf x_m^{\frac{P(m,\sigma(m))}{P(m,\sigma^{-1}(m))} - 1}
\prod_{\substack{m,w\\\sigma(m) = w\\\sigma(w) \neq m}}\hspace{-.2cm} \mathbf x_m
\\ & =
\prod_m {\textstyle\frac{P(m,\sigma(m))}{P(m,\sigma^{-1}(m))} }
\prod_{\substack{m,w\\\sigma(m) = w\\\sigma(w) \neq m}}\hspace{-.2cm}
\mathbf x_m^{\frac{P(m,\sigma(m))}{P(m,\sigma^{-1}(m))}}
\\ & =
\prod_m {\textstyle\frac{P(m,\sigma(m))}{P(m,\sigma^{-1}(m))} }
\prod_{\substack{m,w\\\sigma^{-1}(m) = w\\\sigma^{-1}(w) \neq m}}\hspace{-.2cm}
\mathbf x_m^{\frac{P(m,\sigma(m))}{P(m,w)}}
\end{align*}
Symmetrically, we have
\begin{align*}
\prod_w \varphi_w'(\mathbf y_w)
\prod_{\substack{m,w\\\sigma(m) \neq w\\\sigma(w) = m}}\hspace{-.2cm}
\varphi_w(\mathbf y_w)^{\frac{P(\sigma^{-1}(w),w)}{P(m,w)}}
& =
\prod_w {\textstyle\frac{P(\sigma(w),w)}{P(\sigma^{-1}(w),w)} }
\prod_{\substack{m,w\\\sigma^{-1}(m) \neq w\\\sigma^{-1}(w) = m}}\hspace{-.2cm}
\mathbf y_w^{\frac{P(\sigma(w),w)}{P(m,w)}}
\end{align*}
Finally, the two products of ratios cancel each other:
$\prod_m \frac{P(m,\sigma(m))}{P(m,\sigma^{-1}(m))} 
\prod_w \frac{P(\sigma(w),w)}{P(\sigma^{-1}(w),w)} 
= 1$.
\end{proof}

\subsection{Inclusion-exclusion principle}

In this subsection, we compute the probability that a matching $\mu$ is the men/women-optimal stable matching.
To do so, we use an inclusion-exclusion principle on the set of rotations which could be exposed and women/men-improving in $\mu$.

\begin{lemma}
\label{lemma:optimal}
Let $P:\M\times\W\rightarrow\mathbb R_{>0}$ define a symmetric anti-popularity preference distribution.\\
The probability that a matching $\mu:\M\cup\W\rightarrow\M\cup\W$ is stable and men/women-optimal is
\begin{align*}
\P[\mu\text{ is stable and women-optimal}]
&= \sum_{\substack{\sigma\text{ permutation}\\\sigma_{|\M} = \mu_{|\M}}}
(-1)^{C(\sigma)} \cdot  \P[\sigma\text{ is stable}] \\
\P[\mu\text{ is stable and men-optimal}] &=
\sum_{\substack{\sigma\text{ permutation}\\\sigma_{|\W} = \mu_{|\W}}}
(-1)^{C(\sigma)} \cdot  \P[\sigma\text{ is stable}]
\end{align*}
where $C(\sigma)$ is the number of cycle of length $>2$ in $\sigma$.
\end{lemma}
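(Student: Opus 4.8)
The plan is to prove the women-optimal identity; the men-optimal one follows by the symmetric argument, exchanging the roles of $\M$ and $\W$, replacing $\sigma_{|\M} = \mu_{|\M}$ by $\sigma_{|\W} = \mu_{|\W}$, and replacing women-improving rotations by men-improving ones. I would first recall the rotation characterization of optimality: a stable matching $\mu$ is women-optimal if and only if no women-improving rotation is exposed in $\mu$, since otherwise eliminating such a rotation would yield a stable matching strictly preferred by all women (as stated in Subsection~\ref{sec:previouslattice}, eliminating a women-improving rotation produces $\mu_2 \succeq_\W \mu_1$). Thus the quantity to compute is $\P[\mu\text{ is stable and no women-improving rotation is exposed in }\mu]$.

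Next I would set up the combinatorial dictionary. Let $\mathcal{R}$ be the set of rotations $r$ (directed cycles of length $>2$) compatible with $\mu$, meaning $r(m) = \mu(m)$ for every man $m \in r$; these are exactly the rotations that can be exposed and women-improving in $\mu$. The permutations $\sigma$ with $\sigma_{|\M} = \mu_{|\M}$ are in bijection with the families $T \subseteq \mathcal{R}$ of pairwise-disjoint rotations: given such a $T$, layer its rotations onto $\mu$ (keeping $\mu$ on all untouched agents) to obtain a permutation $\sigma_T$; conversely, the cycles of length $>2$ of any such $\sigma$ form a pairwise-disjoint subfamily of $\mathcal{R}$, while its $2$-cycles are matched pairs of $\mu$. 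Under this bijection $C(\sigma_T) = |T|$, and $T = \emptyset$ corresponds to $\sigma_\emptyset = \mu$ with $C = 0$.

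I would then invoke the structural property of stable permutations from Subsection~\ref{sec:previouslattice}. Let $A$ be the event ``$\mu$ is stable'', and for each $r \in \mathcal{R}$ let $B_r$ be the event ``$\mu$ is stable and $r$ is exposed and women-improving in $\mu$''. For a pairwise-disjoint family $T$, the stated property gives that $\sigma_T$ is stable if and only if $\mu$ is stable and every rotation of $T$ is exposed and women-improving in $\mu$, so $\P[\sigma_T\text{ is stable}] = \P[A \cap \bigcap_{r\in T} B_r]$. The crucial point is that whenever $T$ contains two rotations sharing a vertex, this probability is $0$: two rotations exposed and women-improving in the same stable matching are disjoint. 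Applying inclusion-exclusion while keeping $A$ in every term,
\[
\P\Big[A \cap \bigcap_{r\in\mathcal{R}} \overline{B_r}\Big]
= \sum_{T\subseteq\mathcal{R}} (-1)^{|T|}\,\P\Big[A \cap \bigcap_{r\in T} B_r\Big],
\]
every term indexed by a family with two intersecting rotations vanishes, so the right-hand side collapses to the sum over pairwise-disjoint $T$, which by the bijection equals $\sum_{\sigma\,:\,\sigma_{|\M}=\mu_{|\M}} (-1)^{C(\sigma)}\,\P[\sigma\text{ is stable}]$. The left-hand side is precisely $\P[\mu\text{ is stable and women-optimal}]$, giving the claim.

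The main obstacle is the bookkeeping of this inclusion-exclusion: one must run it over \emph{all} subsets of candidate rotations in order to apply the principle, and then argue that the families containing two intersecting rotations contribute nothing, so that the surviving terms are exactly the permutations $\sigma$ with $\sigma_{|\M} = \mu_{|\M}$ weighted by $(-1)^{C(\sigma)}$. Verifying the bijection $T \leftrightarrow \sigma_T$ together with $C(\sigma_T)=|T|$, and matching the vanishing/non-vanishing terms to disjoint versus non-disjoint rotation families, is the delicate part; the rest is the routine translation between ``no exposed women-improving rotation'' and ``women-optimal''.
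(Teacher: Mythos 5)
Your proposal is correct and follows essentially the same route as the paper's proof: inclusion--exclusion over all subsets of the rotations $r$ with $r(m)=\mu(m)$, vanishing of the terms indexed by non-disjoint families, identification of the surviving events with stability of the associated permutation $\sigma_T$, and the count $C(\sigma_T)=|T|$. You merely make the bijection between disjoint rotation families and permutations $\sigma$ with $\sigma_{|\M}=\mu_{|\M}$ more explicit than the paper does.
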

\begin{proof}
The men and women cases being symmetric, we prove the formula giving the probability that a matching is stable and women-optimal.
Let $\mathcal R$ be the set of rotations $r$ such that $r(m) = \mu(m)$ for all man $m \in r$. Matching $\mu$ is outputted by WPDA when it is stable and women-optimal: no rotation $r \in \mathcal R$ is exposed and women-improving in $\mu$.
\[\P[\mu\text{ is stable and women-optimal}] =
\P[\mu\text{ is stable}] - \P[\mu\text{ is stable and some }r\in \mathcal R\text{ is exposed}]\]
Using an inclusion-exclusion principle to compute the probability of a disjunction, we obtain:
\[\P[\mu\text{ is stable and women-optimal}] =
\sum_{R\subseteq \mathcal R} (-1)^{|R|} \cdot\P[\mu\text{ is stable and every }r\in R\text{ is exposed}]\]
Recall that two different rotations can be exposed at the same time only if they are disjoint. Thus, we can consider only sets $R\subseteq\mathcal R$ of disjoint rotations. Moreover, $\mu$ is stable and every rotation from $R$ is exposed if an only if the associated permutation $\sigma_R$ is stable.
\[\sigma_R : \left\{\begin{array}{ll}
    m \mapsto \mu(m) &\text{if } m\in \M \\
    w \mapsto \mu(w) & \text{if }w \in \W\text{ and }w \notin r \text{ for all } r \in R\\
    w \mapsto r(w) & \text{if }w \in \W\text{ and }w \in r \text{ for some } r \in R\\
\end{array}\right.\]
If $C(\sigma)$ is the number of cycles of length $>2$ in $\sigma$, we have $C(\sigma_R) = |R|$, concluding the proof.
\end{proof}

\section{Future work}

We proved that under certain input distributions, MPDA and WPDA have the same output distribution.
This distribution can be computed by combining Lemmas~\ref{lemma:proba} and~\ref{lemma:optimal}.
In the uniform case, all the matching have the same probability in the output distributions.
Simplifying our formula to find $1/N!$ would be an interesting result.

\medskip
Procedures WPDA and MPDA are two deterministic algorithms which select one stable matching.
It would be interesting to characterize which algorithm also has the same output distribution.
Some candidate mechanisms are studied in \cite{klaus2006procedurally}.
In particular, we believe that our proof extends to the mechanism of
\emph{Employment by Lotto} \cite{aldershof1999refined}, and numerical simulations
suggest it also applies for Roth and Vande Vate's \emph{incremental procedure}
\cite{roth1990random,ma1996randomized,biro2008dynamics}.


\medskip
The ``ex-ante core-converge'' property is a mathematical curiosity, and it does
not imply anything on the strategyproofness of the deferred acceptance algorithms.
However, one economic interpretation is the following.
A decision maker who has prior knowledge on the input distribution of preferences
(\textit{e.g.} from historical data) might try to favor some outcomes
(independently of agents' preferences). We proved that under certain input
distributions, a decision maker who has to chose between the MPDA and WPDA
procedures cannot manipulate (before seeing agents' preferences).

\newpage
\bibliographystyle{ACM-Reference-Format}
\bibliography{biblio}

\end{document}